\title{Generalized Sorting with Predictions} 
\author{Pinyan Lu}{Shanghai University of Finance and Economics, China \and Huawei TCS Lab, China}{lu.pinyan@mail.shufe.edu.cn}{}{}
\author{Xuandi Ren}{Peking University, China}{renxuandi@pku.edu.cn}{}{}
\author{Enze Sun}{Shanghai Jiao Tong University, China}{sun\_en\_ze@sjtu.edu.cn}{}{}
\author{Yubo Zhang}{Peking University, China}{zhangyubo18@pku.edu.cn}{}{}
\authorrunning{P.\,Lu, X.\,Ren, E.\,Sun, and Y.\,Zhang}
\keywords{Algorithm, Generalized Sorting, Prediction} 
\begin{document}

\maketitle

\begin{abstract}
Generalized sorting problem, also known as sorting with forbidden comparisons, was first introduced by Huang et al. ~\cite{huang2011algorithms} together with a randomized algorithm which requires $\tilde O(n^{3/2})$ probes. We study this problem with additional predictions for all pairs of allowed comparisons as input. We propose a randomized algorithm which uses $O(n \log n+w)$ probes with high probability and a deterministic algorithm which uses $O(nw)$ probes, where $w$ is the number of mistakes made by prediction.

\end{abstract}

\section{Introduction}
\subsection{Generalized sorting}
Sorting is arguably the most basic computational task, which is also widely used as an important component of many other algorithms. Pair-wise comparison is the core of most sorting algorithms. In the standard model, it is assumed that we can make comparison between all pairs of elements as we want and they are all of the same cost. However, this may not be the case in many applications. There might be some constrains which forbid us to compare some pairs of elements, or the costs for comparing different pairs of elements may be different. The non-uniform cost sorting model is studied in~\cite{charikar2002query,gupta2001sorting,khanna2003selection}. A special case called matching nuts and bolts problem is studied in \cite{alon1994matching,komlos1998matching}.

In this paper, we study the model introduced by Huang, Kannan and Khanna \cite{huang2011algorithms}, which is known as generalized sorting problem or sorting with forbidden pairs. In this model, only a subset of the comparisons are allowed, and each allowed comparison is of the same cost. We can view the elements as the vertices of a graph, each undirected edge in the graph represents a pair of elements which is allowed to be compared. A comparison of two elements $a,b$ leads to the exposure of the direction of edge $(a,b)$. It is guaranteed that the hidden directed graph is acyclic, and it contains a Hamiltonian path which represents the total order of all elements. Our goal is to adaptively probe edges in $E$ to find out the  Hamiltonian path.

For generalized sorting problem, the performance of an algorithm is measured by the number of edges it probes. In standard sorting problem where $G$ is a complete graph, the minimum number of probes required is $\Theta(n \log n)$. For general graphs, one may need more probes. Huang et al. \cite{huang2011algorithms} have proved an upper bound of $\widetilde O(n^{1.5})$ on the number of probes by giving a randomized algorithm. When the graph is dense and the number of edges is as large as $\binom{n}{2}-q$, \cite{Banerjee2016SortingUF} proposes a deterministic algorithm which makes $O((n+q) \log n)$ probes together with a randomized algorithm which makes $\widetilde{O}\left(n^{2} / \sqrt{q+n}+n \sqrt{q}\right)$ probes with high probability. Most part of the generalized sorting problem is still open.

\subsection{Algorithms with predictions}
Recently, there is an interesting line of research called algorithm design with predictions \cite{mitzenmacher2020algorithms}, which is motivated by the observation that by making use of predictions provided by machine learning, one may be able to design a more effective algorithm. Normally,  the better the prediction, the better the performance. In this framework, we aim for algorithms which have near optimal performance when the predictions are good, and no worse than the prediction-less case when the predictions have large errors. The above two targets in algorithm design with predictions are called consistency and robustness respectively.

Take the classic binary search algorithm as an example, which can find the position of an existing element in a sorted list in $O(\log n)$ comparisons. It starts by querying the median of the list. However, if a machine learning algorithm can roughly estimate the position of the given element, it may not be always a good idea to start from the middle. Based on this idea, one designed an algorithm with query complexity of $O(\log w)$, where $w$ is the distance between the true position of the element and the estimated one, which measures the accuracy of the prediction. This algorithm can be much better than $O(\log n)$ when $w$ is much smaller than $n$, and no worse than the prediction-less binary search even if the estimation is terribly wrong because $w$ is at most $n$.

Algorithms with predictions are studied for caching~\cite{lykouris2018competitive,rohatgi2020near}, ski-rental, online scheduling~\cite{purohit2018improving} and other problems. See the nice survey by Mitzenmacher and Vassilvitskii \cite{mitzenmacher2020algorithms}.

\subsection{Our results}
In this paper, we initiate the study of generalized sorting with predictions. The model is very natural, besides the undirected graph $G=(V, E)$, we are also given an orientation of $G$ as input, which are predictions of the hidden direction of the edges. The number of mis-predicted edges is denoted by $w$.  With the help of predictions, we hope to improve the bound of $\tilde O(n^{1.5})$ when $w$ is small. 

In section 3, we propose a randomized algorithm for the generalized sorting problem and prove that it probes at most $O(n \log n+ w)$ edges with high probability. The description of the algorithm is simple while the analysis is quite subtle and involved.

\begin{restatable}[An $O(n \log n+w)$ randomized algorithm]{theorem}{mainthm}
\label{thm:mainthm1}
There is a polynomial time randomized algorithm which solves generalized sorting problem in $O(n \log n+w)$ probes with high probability.
\end{restatable}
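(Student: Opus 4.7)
The plan is to design a randomized algorithm that builds the true sorted sequence incrementally from smallest to largest, using the predictions to nominate each successive element and probing edges mostly to confirm (and occasionally to refute) the nominations. First, I would compute a topological ordering $\pi = (v_1, \ldots, v_n)$ of the predicted DAG; this requires no probes. Then I would iterate: maintain a confirmed sorted prefix, and at step $i$ let the next candidate be the predicted minimum of the remaining set $R$. Randomly sample $O(\log n)$ edges incident to this candidate within $R$ and probe them. If every probed edge is consistent with the candidate being the current minimum, accept it and extend the prefix; otherwise a probe exposes a witness that the candidate is not minimal, and we revise the candidate using that witness and repeat.

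For the analysis I would split the probes into two classes. The \emph{baseline probes}---the $O(\log n)$ random samples per iteration---sum to $O(n \log n)$ over all $n$ iterations. The \emph{reactive probes}, performed to re-verify revised candidates after a misprediction has been exposed, are charged against specific wrong predictions and therefore sum to $O(w)$. Correctness (that with high probability the verification does not mistakenly accept a non-minimum) would follow from a Chernoff-type concentration argument over the random sampling, combined with a union bound across the $n$ iterations. The total probe count then concentrates around $O(n \log n + w)$ with high probability.

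The main obstacle---and likely the reason the authors describe the analysis as ``quite subtle and involved''---is controlling the cascade of reactive verifications within a single iteration. Each refuted candidate may itself be wrong, triggering more probes; a naive charging would pay $O(\log n)$ per failure, inflating the overall cost by a logarithmic factor. My plan is to design a potential function whose decrease in each reactive step is at least as large as the probing cost incurred, so that each misprediction is charged only $O(1)$ in total throughout the algorithm. Making this potential function monotonic under the algorithm's adaptive choices, and verifying it survives the sampling randomness while still preserving the $O(n \log n)$ baseline across iterations, is the technical heart of the argument.
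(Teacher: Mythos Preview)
Your proposal has a fundamental correctness gap in the verification step. You propose to accept a candidate $c$ as the minimum of the remaining set $R$ after probing $O(\log n)$ random edges between $c$ and $R$ and finding them all consistent. But in generalized sorting the graph is not complete: when $c$ is \emph{not} the true minimum of $R$, the only edge in $E$ witnessing this may be the single Hamiltonian edge $(p,c)$ from $c$'s true predecessor $p\in R$, while $c$ may have $\Theta(n)$ other edges into $R$, all of them genuinely outgoing. Your random sample then hits the one refuting edge with probability only $O((\log n)/n)$, so with high probability you accept a wrong vertex and the prefix is corrupted irreparably. No Chernoff argument rescues this---the failure is a needle-in-a-haystack miss, not a deviation from a mean. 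A secondary but immediate problem is that $\vec G_P$ carries no acyclicity guarantee whatsoever, so ``compute a topological ordering of the predicted DAG'' may simply be impossible; and even when it exists, a predicted source of $R$ need not be unique.

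The paper takes an entirely different route that never tries to statistically verify minimality. It maintains a set $A$ of vertices whose predicted in-edges are all resolved, and for each $u\notin A$ it seeks a \emph{certificate} that $u$ is not yet ready: either some $v\in T_u\setminus A$, or a pair $v_1,v_2\in T_u\cap A$ incomparable under the partial order $<_A$ of the induced subgraph $G[A]$. Probing the edge(s) from such a certificate to $u$ either exposes a mispredicted edge (charged to $w$) or confirms the certificate, after which $u$ is left alone until the certificate expires. The decisive structural lemma is that the order in which vertices enter $A$ is \emph{deterministic}, independent of the random choices, because readiness is judged only via $<_A$ and the algorithm deliberately ignores probe information outside $G[A]$. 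Consequently, for each $u$ the times at which its possible certificates expire are fixed in advance; choosing certificates uniformly at random then makes the number of certificate replacements for $u$ equal to the number of left-to-right maxima in a uniformly random permutation, which is $O(\log n)$ with high probability. Summing over $u$ gives the $O(n\log n)$ term. Your potential-function sketch offers no analogue of this fixed-order decoupling, and without it there is no mechanism to bound the number of re-verifications.
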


In section 4, we also propose a deterministic algorithm using $O(nw)$ probes, in order to show that when $w$ is as small as a constant, the generalized sorting with prediction problem can be solved using only linear probes.

\begin{restatable}[An $O(nw)$ deterministic algorithm]{theorem}{secthm}
\label{thm:mainthm2}
There is a polynomial time deterministic algorithm which solves generalized sorting problem in $O(nw)$ probes.
\end{restatable}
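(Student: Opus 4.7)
The plan is an iterative ``probe, detect, correct'' loop over rounds. I maintain a working orientation $\vec H$ of $E$, initialized to the prediction. In each round I perform $O(n)$ probes that either certify $\vec H$ as the true oriented graph or reveal at least one edge whose predicted direction is wrong. In the latter case, I flip that edge in $\vec H$ and begin a new round. Because only $w$ edges are mis-predicted in total, the algorithm runs for at most $w+1$ rounds, so the total probe count is $O(n(w+1)) = O(nw)$, with the $w=0$ case absorbed into a leading $O(n)$ term.

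Executing a single round naturally breaks into two stages. First, I force $\vec H$ to be acyclic: any directed cycle $C$ of $\vec H$ must contain a mis-predicted edge, since the true graph is a DAG, so sweeping the edges of $C$ and probing them exposes a mis-prediction within $|C| \le n$ probes. Second, once $\vec H$ is acyclic, I compute a topological order $v_1,\dots,v_n$ and verify it by probing edges of $E$ along this order --- the consecutive pairs $(v_i,v_{i+1})$ that lie in $E$, together with a few ``bridges'' $(v_i,v_j)$ with $j>i+1$ when consecutive edges are missing. If every such probe agrees with $\vec H$, then by transitivity $v_1\to\dots\to v_n$ is forced to be the true Hamiltonian path and we output it; otherwise, some probe returns the opposite direction, yielding a fresh mis-prediction to flip before starting the next round.

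The main obstacle I anticipate is in the second stage: showing that an $O(n)$-size set of probes, chosen by the algorithm without knowing where the errors lie, must expose a mis-prediction whenever $\vec H$ still differs from the true orientation. The complication is that the topological order of $\vec H$ need not be unique, and many pairs $(v_i,v_{i+1})$ may fail to lie in $E$, so a naive sweep might confirm every edge it probes while still missing an error elsewhere. I would address this by exploiting the fact that the true Hamiltonian path lies entirely in $E$: any chain of confirmed $\vec H$-consistent edges of $E$ that covers all $n$ vertices in order forces $\vec H$ to agree with the truth, and contrapositively, if $\vec H$ is still wrong then at least one probed edge in the carefully chosen $O(n)$ sweep must disagree. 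Combining this invariant with the $O(w)$ bound on rounds and the $O(n)$ per-round probe bound then completes the proof.
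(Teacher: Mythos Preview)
Your high-level plan matches the paper: maintain a corrected orientation, and in each round either find a directed cycle (and hence a mis-predicted edge on it) or work with an acyclic $\vec H$; at most $w+1$ rounds of $O(n)$ probes each. The cycle stage is identical to the paper's.

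The gap is in your second stage. You propose to pick \emph{some} topological order $v_1,\dots,v_n$ of $\vec H$, probe the consecutive pairs that happen to lie in $E$, and patch the missing ones with unspecified ``bridges'' $(v_i,v_j)$. You then assert that if every such probe agrees with $\vec H$, the order is correct. This is not true in general, and you have not said how to choose the bridges so that it becomes true. Concretely, take $V=\{a,b,c,d\}$, true order $a<b<c<d$, and $E=\{(a,b),(b,c),(c,d),(a,d)\}$. Flip only $(b,c)$, so $\vec H$ has $a\to b$, $c\to b$, $c\to d$, $a\to d$. Then $\vec H$ is acyclic and $c,a,b,d$ is a valid topological order. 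The consecutive pairs $(c,a)$ and $(b,d)$ are not in $E$; the only natural bridges in $E$ spanning these gaps are $(c,d)$ and $(a,d)$. Probing $(a,b),(c,d),(a,d)$ returns all ``consistent with $\vec H$'', yet $\vec H$ is still wrong: the flipped edge $(b,c)$ was never touched, and the probed edges do not force $c,a,b,d$ to be the Hamiltonian path. So an $O(n)$ sweep of this shape can confirm every edge it probes while missing the error. Your proposal explicitly flags this as ``the main obstacle'' but does not resolve it.

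The paper's fix is a genuinely different idea in this stage. Rather than committing to a full topological order and bridging gaps, it runs the topological sort only until the \emph{first} moment two vertices $v_1,v_2$ simultaneously have in-degree $0$ in the residual graph (this must happen whenever $\vec H$ lacks a Hamiltonian path). It then probes the $k-1$ edges along the partial order $a_1,\dots,a_k$ found so far \emph{and} all edges of $E$ incident to $v_1$ and $v_2$ --- still $O(n)$ probes. The point is that if all of these probes were correct, then $a_1<\cdots<a_k<v_1,v_2$ in the true order, and the true Hamiltonian predecessor of the later of $v_1,v_2$ is some vertex outside $\{a_1,\dots,a_k\}$; that incoming edge is adjacent to $v_2$ and is necessarily mis-oriented in $\vec H$ (else $v_2$ would not have had in-degree $0$). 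Hence at least one of the probed edges is a mis-prediction. This ``probe the full neighbourhood of the two tied sources'' step is the missing ingredient in your plan.
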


Note that in the query complexity model, if we have two algorithms $\mathcal A$ and $\mathcal B$ which use $O(f(n))$ and $O(g(n))$ queries respectively, we can simply merge them into an algorithm $\mathcal C$, which simulates $\mathcal A$ and $\mathcal B$, and make $\mathcal A$'s queries and $\mathcal B$'s queries alternately. Then $\mathcal C$ uses only $O(\min(f(n),g(n)))$ queries. Therefore by combining our algorithms with the one in \cite{huang2011algorithms}, both consistency and robustness can be achieved.

\section{Preliminaries}
The input of the generalized sorting with prediction problem is an undirected graph $G=(V,E)$ together with an orientation $\vec P$ of $E$. There is another orientation $\vec E$ of $E$ which represents the underlying total order and is unknown to us. The problem is formally stated as follows:

\begin{definition}[generalized sorting with prediction]
	An instance of generalized sorting with prediction problem can be represented as $(V,E,\vec E,\vec P)$, where
	\begin{itemize}
		\item $G=(V,E)$ is an undirected graph.
		\item $\vec P,\vec E$ are two orientations of $E$, i.e. $\forall (u,v) \in E$, exactly one of $(u,v) \in \vec P$ and $(v,u) \in \vec P$ holds, and exactly one of $(u,v) \in \vec E$ and $(v,u) \in \vec E$ holds.
		\item $\vec G=(V,\vec E)$ is the directed graph which represents the underlying total order, it is guaranteed that $\vec G$ is acyclic and there is a directed Hamiltonian path in it.
		\item $\vec G_P=(V,\vec P)$ is the predicted directed graph, there are no more guarantees about $\vec G_P$.
	\end{itemize}	
\end{definition}

An edge $(u,v)$ whose direction is different in $\vec{P}$ and $\vec{E}$ is called a \textit{mis-predicted edge}. An in-neighbor of $u$ in $\vec{G}_P$ which is not an in-neighbor of $u$ in $\vec{G}$ is called a \textit{wrong in-neighbor}.

We use $n=|V|$ to denote the number of vertices and $w=|\vec P \backslash \vec E|$ to denote the number of mis-predicted edges. The input and the required output of the problem are stated as follows:
\begin{itemize}
	\item Input: $(V,E,\vec P)$
	\item Output: $(v_1,v_2,...,v_{n})$ s.t. $\forall 1 \le i < n, (v_i,v_{i+1}) \in \vec E$, which represents the directed Hamiltonian path in $\vec G$.
\end{itemize}
	
Note that $\vec E$ is not given as input, but is fixed at the very beginning and does not change when the algorithm is executing.

An algorithm can adaptively probe an edge and know its direction in $\vec E$. The performance of an algorithm is measured by the number of edges it probes, which may be in terms of $n$ and $w$.

Recall that $\vec G=(V,\vec E)$ is acyclic, so any subset $\vec E' \subseteq \vec E$ naturally defines a partial order of $V$. Sometimes we focus on a vertex set $V'$, and only consider the partial order in the induced subgraph $\vec G[V']$:

\begin{definition}[partial order $<_{\vec E'}$ and $<_{V'}$]
~
	\begin{itemize}
		\item $\vec E'\subseteq \vec E$ defines a partial order of $V$ on graph $(V,\vec E')$, which is referred to as $<_{\vec E'}$. For $a,b \in V$, $(a<_{\vec E'}b)$ iff there is a directed path from $a$ to $b$ which only consists of edges in $\vec E'$.
		\item $V' \subseteq V$ defines a partial order of $V'$ on the induced subgraph $\vec G[V']$, which is referred to as $<_{V'}$. For $a,b \in V'$, $(a<_{V'} b)$ iff there is a directed path from $a$ to $b$ which only consists of edges in $\vec E$ and only passes vertices in $V'$.
	\end{itemize}
\end{definition}

\begin{definition}[vertex sets $\mathcal N_{in}(G_{\vec P},u),S_u,T_u$]~

	Denote by $\mathcal N_{in}(G_{\vec P},u)$ the set of all in-neighbors of $u$ in the prediction graph, i.e. $\mathcal N_{in}(G_{\vec P},u)=\{v|(v,u) \in \vec P\}$.
	
	Denote by $S_u$ the set of real in-neighbors of $u$ among $\mathcal N_{in}(G_{\vec P},u)$, i.e. $S_u=\{v|(v,u) \in \vec P \cap \vec E\}$.
	
	Denote by $T_u$ (with respect to a specific moment) the set of in-neighbors of $u$, which are not known to be wrong at that moment, i.e. the corresponding edges are either correct or unprobed. $T_u=\{v| (v,u) \in \vec P \land (u,v) \notin \vec Q\}$ where $\vec Q$ (also with respect to a specific moment) is the set of probed directed edges up to that moment.
\end{definition}

Note by definition it always holds $S_u \subseteq T_u \subseteq \mathcal N_{in}(G_{\vec P},u)$. $S_u$ and $\mathcal N_{in}(G_{\vec P},u)$ are fixed while $T_u$ may change over time. Initially there are no probed edges, so $T_u=\mathcal N_{in}(G_{\vec P},u)$. As the algorithm proceeds, some mis-predicted edges between $\mathcal N_{in}(G_{\vec P},u)$ and $u$ are found, the corresponding wrong in-neighbors no longer belong to $T_u$ and $T_u$ will finally shrink to $S_u$.

\section{An algorithm using $O(n \log n+w)$ probes} 

\subsection{Description}
The algorithm maintains a set of vertices $A$ satisfying $\forall u \in A$, direction of edges between $\mathcal N_{in}(\vec G_P,u)$ and $u$ are all known to us (either probed or can be deduced from other probed edges). Notice that the direction of edges in the induced subgraph $G[A]$ must be all known. When $A=V$, the direction of all edges are known and we can easily find the desired Hamiltonian path. 

We initialize $A$ as $\emptyset$, then iteratively add `ideal vertices' to $A$, which are defined as follows:

\begin{definition}[ideal vertex]
A vertex $u\in V$ is called an \textit{ideal vertex} if both of the following conditions are satisfied:
	\begin{enumerate}
		\item $T_u \subseteq A$.
		\item The partial order $<_A$ restricted to $T_u$ is a total order.
	\end{enumerate}
\end{definition}

Before adding a vertex $u$ to $A$, we need to determine the direction of edges between $T_u$ and $u$ (those between $\mathcal N_{in}(\vec G_P,u) \backslash T_u$ and $u$ have been already known to be mis-predicted). For an ideal vertex $u$, this can be done by using a straightforward strategy: repeatedly probe the edge $(t,u)$, where $t$ is the largest vertex in $T_u$ with respect to $<_A$. If the direction of this edge is correct, i.e. $t<_{\vec E} u$, we can conclude that the direction of all edges between $T_u$ and $u$ are correct by transitivity. We can end this phase and add $u$ to $A$. Otherwise $(t,u)$ is a mis-predicted edge, $t$ is removed from $T_u$ and we move on to probe the edge between the new largest vertex in $T_u$ and $u$, and so on.

If there is an ideal vertex, we are in an ideal case: by probing only one edge, we either learn the direction of all edges between $T_u$ and $u$ and add $u$ into $A$, or find a mis-predicted edge. Notice that each vertex is added to $A$ once, and the wrong probes are charged to the $w$ term of complexity. Therefore we can add all vertices to $A$ in only $O(n+w)$ probes, assuming there is \textbf{always} an ideal vertex in each step. 

However, the assumption does not always hold due to the existence of mis-predicted edges, i.e. there may be a time when there is no ideal vertex. We have to relax the conditions to define a new type of vertex to help, which always exists:

\begin{definition}[active vertex]
A vertex $u\in V$ is called an \textit{active vertex} if both of the following conditions are satisfied:
\begin{enumerate}
	\item $S_u\subseteq A$.
	\item The partial order $<_A$ restricted to $S_u$ is a total order.
\end{enumerate} 
\end{definition}
 
\begin{lemma}\label{actexist}
 	There is at least one active vertex in $V \backslash A$ if $A \ne V$.
\end{lemma}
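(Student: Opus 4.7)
The plan is to exhibit an explicit active vertex, namely the one that is smallest in $V\setminus A$ with respect to the true total order $<_{\vec E}$. Since $A\ne V$, the Hamiltonian path $v_1<_{\vec E}v_2<_{\vec E}\cdots<_{\vec E}v_n$ has a first vertex $u^{*}=v_k$ lying outside $A$. By minimality, every $v_i$ with $i<k$ lies in $A$, i.e.\ every vertex strictly less than $u^{*}$ in $<_{\vec E}$ is already in $A$. I will then verify the two conditions of the active-vertex definition for this $u^{*}$.

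For condition~1, I need $S_{u^{*}}\subseteq A$. Any $v\in S_{u^{*}}$ satisfies $(v,u^{*})\in\vec P\cap\vec E$, so in particular $(v,u^{*})\in\vec E$, hence $v<_{\vec E}u^{*}$. By the choice of $u^{*}$, any such $v$ lies in $A$.

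For condition~2, I must show that $<_A$ restricted to $S_{u^{*}}$ is a total order. Pick any two distinct $a,b\in S_{u^{*}}$ and assume WLOG that $a<_{\vec E}b$; write $a=v_i$ and $b=v_j$ with $i<j<k$. The Hamiltonian path segment $v_i\to v_{i+1}\to\cdots\to v_j$ is a directed path in $\vec E$ from $a$ to $b$, and all of the intermediate vertices $v_{i+1},\dots,v_{j-1}$ are strictly less than $u^{*}$ in $<_{\vec E}$, hence lie in $A$ by the observation above. This witnesses $a<_A b$, so any two elements of $S_{u^{*}}$ are comparable under $<_A$, which (together with antisymmetry inherited from $<_{\vec E}$) gives a total order.

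There is no real obstacle here beyond unwinding the definitions carefully: the crucial point is recognizing that picking $u^{*}$ to be $<_{\vec E}$-minimal in $V\setminus A$ forces an entire initial Hamiltonian prefix into $A$, which simultaneously certifies $S_{u^{*}}\subseteq A$ and supplies the in-$A$ paths needed to totally order $S_{u^{*}}$ under $<_A$. The argument uses only the definitions of $S_u$, of $<_A$, and the Hamiltonian-path structure of $\vec G$; it does not depend on any invariant specific to the algorithm other than $A\subseteq V$.
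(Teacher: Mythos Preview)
Your proof is correct and follows essentially the same approach as the paper: both pick $u^{*}=v_k$ to be the first vertex along the Hamiltonian path $(v_1,\dots,v_n)$ not in $A$, so that $\{v_1,\dots,v_{k-1}\}\subseteq A$. The paper's proof is terser and simply asserts condition~2, whereas you spell out explicitly that the Hamiltonian segment $v_i\to\cdots\to v_j$ witnesses $a<_A b$; this added detail is a fair elaboration of what the paper leaves implicit.
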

\begin{proof}
	Suppose the Hamiltonian path in $\vec G$ is $(v_1,...,v_n)$. Let $k$ be the smallest index s.t. $v_k \notin A$, then $v_{k}$ satisfies
 	\begin{enumerate}
 		\item $S_{v_{k}}\subseteq\{v_1,...,v_{k-1}\} \subseteq A$.
		\item $<_A$ restricted to $S_{v_k}$ is a total order.
	\end{enumerate}
	Therefore $v_{k}$ is active at the moment.
\end{proof}

An ideal vertex is always active since $S_u \subseteq T_u$ holds, but there may be some wrong in-neighbors not identified yet (the vertices in $T_u\backslash S_u$) to prevent an active vertex from being ideal. By cleverly identify the wrong in-neighbors and remove them from $T_u$, an active vertex $u$ would become an ideal one and we can use the above strategy again.

As $S_u$ is invisible to us, we know neither which vertices are active, nor which in-neighbors of a vertex are wrong, so we turn to focus on the in-neighbors of $u$ which prevent it from being ideal:

\begin{enumerate}
	\item $v \in T_u$ s.t. $v \notin A$.
	\item $v_1,v_2 \in T_u$ s.t. $(v_1,v_2 \in A) \land (v_1 \not <_A v_2) \land (v_2 \not <_A v_1)$.
\end{enumerate}

Now consider an active vertex $u$. If such $v$ in case 1 exists, then $(v,u)$ must be a mis-predicted edge. If such $v_1,v_2$ in case 2 exists, there is at least one mis-predicted edge in $(v_1,u),(v_2,u)$. By probing $(v,u)$ or both $(v_1,u),(v_2,u)$ repeatedly, we can keep removing its wrong in-neighbors from $T_u$ and finally make $u$ ideal.

For an inactive vertex $u$, the direction of $(v,u)$ or both of $(v_1,u),(v_2,u)$ may be correct, but that tells us $u$ is not active hence is not the vertex we are looking for. In this case the vertex $v$ or the pair of vertices $(v_1,v_2)$ is called a \textit{certificate} for $u$, which proves that $u$ is currently not active. 

\begin{definition}[certificate]
	For a vertex $u \in V$,
	\begin{enumerate}
 		\item A type-1 certificate is a vertex $v\in S_u$ s.t. $v \notin A$.
 		\item A type-2 certificate is a pair of different vertices $v_1,v_2 \in S_u$ s.t. $(v_1,v_2 \in A)\land (v_1 \not<_A v_2) \land (v_2 \not<_A v_1)$.
	\end{enumerate}
\end{definition}

Once a certificate for $u$ is found, we turn to check the activeness of other vertices and do not need to probe any other incoming edges for $u$ until the next vertex is added to $A$. 
For a fixed set $A$, both activeness of vertices and validity of certificates are determined and do not change when new probes are made. Only when $A$ is extended and the current certificate of $u$ is no longer valid do we need to look for a new certificate for $u$. A type-1 certificate $v$ becomes invalid when $v$ is added into $A$, while a type-2 certificate $(v_1,v_2)$ becomes invalid when $(v_1 <_A v_2) \lor (v_2 <_A v_1)$ happens as $A$ expands.

In the worst case, one may need to update the certificates again and again and thus probe too many edges. By checking the validity of certificates in a random order, the worst case is avoided with high probability. We prove that our algorithm uses only $O(n \log n+w)$ probes with high probability in the next subsection, where the term $n \log n$ comes from the probes used in re-searching for valid certificates.

Our algorithm works by repeatedly choose a vertex $u$ which does not have a valid certificate. 
    \begin{enumerate}
        \item If it is an ideal vertex, we use the strategy mentioned above to determine the direction of edges between $T_u$ and $u$, then add $u$ to $A$. 
        \item Otherwise there must be a vertex $v \in T_u$ s.t. $v \notin A$, or a pair of vertices $v_1,v_2 \in T_u$ s.t. $(v_1,v_2 \in A) \land (v_1\not<_A v_2) \land (v_2 \not<_A v_1)$. We randomly choose such a vertex $v$ or such a pair of vertices $(v_1,v_2)$ and probe the edge(s) between $u$ and them. Then either at least one mis-predicted edge is found, or a valid certificate for $u$ is found. 
    \end{enumerate}

 Since there is always an active vertex $u$, after finding some mis-predicted edges and removing the corresponding wrong in-neighbors from $T_u$, $u$ must become an ideal vertex, that is how the algorithm makes progress.

Here is the pseudo code of the algorithm: 
 
\begin{algorithm}
 \caption{A randomized algorithm using $O(n \log n+w)$ probes}
 \begin{algorithmic}[1]
 \State set $A:=\emptyset$
 	\While{$A \ne V$}
 	\State pick $u \in V \backslash A$ s.t. $u$ does not have a valid certificate (if there are multiple ones, pick one with the smallest index)
 	\If {$T_u \not \subseteq A$}
 		\State randomly pick $v \in T_u \backslash A$
 		\State probe $(v,u)$
 	\ElsIf {$\exists$ different $v_1,v_2 \in T_u$ s.t. $(v_1\not<_A v_2) \land (v_2 \not<_A v_1)$}
    	\State randomly select such a pair $v_1,v_2$
	    \State probe $(v_1,u),(v_2,u)$
	\Else
    	\State let $t$ be the largest vertex in $T_u$ w.r.t.  $<_A$
 		\State probe $(t,u)$
    	\If{the direction of $(t,u)$ is correct, i.e. $t<_{\vec E} u$}
    		\State add $u$ to $A$
    	\EndIf
    \EndIf
 
  \EndWhile
\end{algorithmic}
\end{algorithm}

\subsection{Analysis}~

We first prove that the algorithm can always proceed, and always terminates.

\begin{lemma}\label{corr}
	The algorithm can always proceed, and will terminate in finitely many steps. 
\end{lemma}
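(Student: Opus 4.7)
The lemma splits naturally into two claims: the ``can proceed'' claim---that the body of the while loop can always be executed when $A \ne V$---and the termination claim. I will treat them separately, relying on Lemma~\ref{actexist} for the first and on a simple charging argument for the second.

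For the ``can proceed'' claim, the plan is to show that every active vertex admits no valid certificate, and then to invoke Lemma~\ref{actexist}. Suppose $u$ is active. The condition $S_u \subseteq A$ rules out a type-1 certificate, because such a certificate requires an element of $S_u$ lying outside $A$. The condition that $<_A$ is total on $S_u$ rules out a type-2 certificate, because such a certificate requires two elements of $S_u \cap A$ that are incomparable under $<_A$. Hence every active vertex qualifies for the selection on line~3. Lemma~\ref{actexist} guarantees that an active vertex exists whenever $A \ne V$, so the selection is well-defined, and the three branches of the if/elseif/else exhaust the possibilities for the chosen $u$ (with the harmless convention that if $T_u = \emptyset$ in the else branch, one just inserts $u$ into $A$).

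For termination, I will partition each iteration into three cases according to its effect: (a) $u$ is added to $A$; (b) a mis-predicted edge incident to $u$ is discovered and the corresponding vertex is removed from $T_u$; (c) a valid certificate for $u$ is established, which happens precisely when the probe(s) in the two non-ideal branches all turn out to be correctly predicted. Case (a) occurs at most $n$ times in total, since each vertex is added to $A$ once. Case (b) occurs at most $w$ times in total, since each mis-predicted edge is identified at most once.

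The crucial point for bounding case (c) is that certificates are preserved as long as $A$ does not change: a type-1 witness $v \notin A$ stays outside $A$, and the incomparability of a type-2 pair in $<_A$ is preserved, because the partial order $<_A$ depends only on $A$. Consequently, while $A$ is fixed, once a vertex $u$ has acquired a valid certificate the algorithm cannot re-select it, so case (c) contributes at most $n$ iterations per fixed value of $A$. Since $A$ strictly grows and thus takes at most $n+1$ values, case (c) contributes at most $n(n+1)$ iterations overall. Summing gives a total of $O(n^2 + w)$ iterations, each performing at most two probes, which is finite. I do not anticipate any serious obstacle; the only mild care needed is to verify that the non-ideal branches' probe outcomes are captured precisely by cases (b) and (c), which follows directly from the definitions of $T_u$, $S_u$, and the two types of certificate.
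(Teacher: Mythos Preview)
Your proof is correct and follows essentially the same three-case partition as the paper (certificate found / mis-predicted edge found / vertex added to $A$), with the same use of Lemma~\ref{actexist} for the ``can proceed'' part. The only difference is cosmetic: for case~(c) the paper argues finiteness qualitatively via ``an invalid certificate never becomes valid again since $A$ is always enlarging,'' whereas you make the bound explicit as $n(n+1)$ by counting at most $n$ certificate-findings per fixed value of $A$.
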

\begin{proof}
We know from Lemma \ref{actexist} that there is always at least one active vertex outside $A$ if $A \ne V$. Since active vertices have no valid certificates, in each execution of line 3, there is always at least one vertex which meets our requirements.
 	
In each execution of the `while' loop, exactly one of the following happens:
\begin{enumerate}
	\item We find a certificate for $u$ which doesn't have a valid one previously.
	\item We find a mis-predicted edge.
	\item We add a vertex $u$ into $A$.
\end{enumerate}
 	
When case 1 happens, we find a new certificate for $u$. Only when this certificate becomes invalid do we need to look for a new one for $u$. An invalid certificate will never become valid again since $A$ is always enlarging. Therefore this case happens for finitely many times.

Case 2 happens for finitely many times because once we find a mis-predicted edge, we remove a vertex from $T_u$, the size of which is initially finite and non-negative all the time.
 	
Case 3 also happens for finitely many times because each vertex is added into $A$ only once.
\end{proof}
 
Now we proceed to analyze the total number of probes made by the algorithm. First we introduce some important lemmas:
 
\begin{lemma}\label{fix1}
	All vertices are added into $A$ in a fixed order regardless	of the randomness.
\end{lemma}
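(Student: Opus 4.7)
The plan is to proceed by induction on the number $k$ of additions, showing that the set $A_k$ after the $k$-th addition is determined by the input alone, independent of the randomness. The base case $A_0 = \emptyset$ is trivial. For the inductive step, fix $A_{k-1}$ and focus on the phase during which $A$ equals $A_{k-1}$; I need to show the $k$-th vertex added is a fixed vertex $u^\star$, defined as the smallest-index active vertex in $V \setminus A_{k-1}$, which exists by Lemma \ref{actexist} and depends only on $A_{k-1}$ and the input.

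The argument rests on three observations. First, $u^\star$ can never acquire a valid certificate while $A = A_{k-1}$: being active, $S_{u^\star} \subseteq A$ rules out any type-1 certificate, and $<_A$ being a total order on $S_{u^\star}$ rules out any type-2 certificate. Second, any inactive vertex $w$ picked in line 3 must enter branch 1 or 2 (since inactive implies not ideal), and each such iteration either produces a valid certificate for $w$ or removes a wrong in-neighbor from $T_w$; since the wrong in-neighbors are finite, after a bounded number of picks the branch reached for $w$ must produce a certificate (type-1 if some $v \in S_w \setminus A$ still remains in $T_w$, type-2 if $T_w = S_w \subseteq A$ but is not totally ordered under $<_A$). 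Third, line 3 always picks the smallest-index vertex without a valid certificate, so the permanent lack of a certificate at $u^\star$ means no vertex of index larger than $u^\star$ is ever picked during this phase.

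Combining these: all vertices with index less than $u^\star$ are inactive, so each eventually acquires a valid certificate after finitely many picks; thereafter $u^\star$ is picked on every iteration. Each iteration on $u^\star$ either shrinks $T_{u^\star}$ by removing a wrong in-neighbor (this can happen in any of the three branches, since $u^\star$ never produces a certificate) or adds $u^\star$ to $A$ via line 14 of branch 3. Since $T_{u^\star}$ can shrink only finitely often, eventually $T_{u^\star} = S_{u^\star}$, forcing branch 3 on the next pick with a correct probe, and $u^\star$ is added. No other vertex can be added during this phase: line 14 fires only for the vertex currently picked in line 3, and any picked vertex other than $u^\star$ is inactive and hence never reaches branch 3.

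The main obstacle is ensuring that no larger-index active vertex sneaks in and is added before $u^\star$. The smallest-index tie-break in line 3 handles this: larger-index vertices are simply never picked while $u^\star$ remains without a certificate (which it does throughout the phase), so they have no opportunity to reach line 14. Since $u^\star$ is a deterministic function of $A_{k-1}$ and the input, the induction closes and the sequence of additions is fixed.
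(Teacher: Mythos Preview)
Your proof is correct and follows essentially the same approach as the paper: both argue by induction that the $k$-th vertex added is the smallest-index active vertex with respect to $A_{k-1}$, which depends only on the input. The paper's proof is much terser---it simply notes that only active vertices are ever added (since ideal implies active), that activeness is determined by $A$ alone, and concludes immediately that the smallest-index active vertex is the one added---whereas you spell out in detail why the tie-break rule in line~3 and the certificate mechanism together force exactly this vertex (and no other) to reach line~14 during the phase.
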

\begin{proof}
	Recall that an ideal vertex is always active, so we only add active vertices to $A$. Whether a vertex $u$ is active or not only depends on the current $A$. Therefore $\forall 1 \le i \le n$, when $|A|=i-1$, the $i$-th vertex added into $A$ is always the one with the smallest index among all active vertices in $V \backslash A$ regardless of randomness.
\end{proof}

\begin{corollary}\label{fix2}
    Let $C_{(a,b)}(a,b\in V)$ denotes the event that the vertex pair $(a,b)$ becomes comparable in $<_A$, i.e. $(a,b\in A) \land ((a<_A b) \lor (b<_A a))$. As $A$ expands, all $\binom{n}{2}$ such events happen in a fixed order regardless of the randomness (breaking ties in an arbitrarily fixed manner).
\end{corollary}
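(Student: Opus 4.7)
The plan is to derive this directly from Lemma~\ref{fix1}. By that lemma, regardless of the randomness used by the algorithm, there is a deterministic sequence
$$\emptyset = A_0 \subsetneq A_1 \subsetneq \cdots \subsetneq A_n = V,$$
where $A_i$ denotes the state of $A$ immediately after the $i$-th vertex has been added. So the ``history'' of $A$ as a set-valued process is fixed.

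Next I would observe that the relation $<_A$ is a function of $A$ and the fixed (but unknown) oriented graph $\vec G=(V,\vec E)$ only, and has nothing to do with what the algorithm has probed so far or what random choices it made. Indeed, $a<_A b$ iff there is a directed $\vec E$-path from $a$ to $b$ whose internal vertices all lie in $A$. Hence, for each $i$, the set
$$\mathcal{C}_i \;:=\; \{(a,b) : C_{(a,b)} \text{ holds when } A = A_i\}$$
is determined purely by $A_i$ (which is itself deterministic by Lemma~\ref{fix1}) and $\vec E$, so $\mathcal C_i$ is a deterministic set.

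Combining these two facts, I would define $B_i := \mathcal C_i \setminus \mathcal C_{i-1}$ to be the set of pairs that become newly comparable at the moment the $i$-th vertex is added. Because $<_A$ is monotone in $A$ (adding a vertex can only create new comparable pairs; it never makes two already-comparable vertices incomparable), the $B_i$'s are pairwise disjoint. Since $\vec G$ contains a directed Hamiltonian path, $<_V$ is a total order on $V$, so $\mathcal C_n$ consists of all $\binom{n}{2}$ unordered pairs and therefore $B_1,\ldots,B_n$ partition the pair set. Concatenating $B_1,B_2,\ldots,B_n$ in order, and inside each $B_i$ using any fixed tie-breaking rule (for instance, lexicographic order on the pair), yields a total ordering of all $\binom{n}{2}$ events $C_{(a,b)}$ that is identical across every execution of the algorithm.

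There is no real obstacle here: the only thing to verify carefully is that $<_A$ is monotone under set inclusion of $A$, which is immediate from the path-based definition, and that every pair eventually becomes comparable, which follows from the Hamiltonian-path guarantee on $\vec G$. Thus the corollary is essentially a direct book-keeping consequence of Lemma~\ref{fix1}.
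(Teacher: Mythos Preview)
Your proposal is correct and matches the paper's intended approach: the paper states this as an immediate corollary of Lemma~\ref{fix1} without proof, and your argument is exactly the natural unpacking---the sequence $A_0\subsetneq\cdots\subsetneq A_n$ is deterministic, $<_A$ depends only on $A$ and $\vec E$ (the point stressed in the paper's Remark), and monotonicity plus the Hamiltonian-path guarantee finish it. One tiny wording slip: by the paper's definition $a<_A b$ requires \emph{all} vertices of the witnessing path (including endpoints) to lie in $A$, not just the internal ones, but since $C_{(a,b)}$ already demands $a,b\in A$ this does not affect your argument.
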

{\bf Remark.} Lemma \ref{fix1} and Corollary \ref{fix2} use the fact that we determine the order among vertices in $T_u$ not according to all edges probed till now, but only according to the edges in the induced subgraph $G[A]$. It seems more efficient if we use all the information instead of the restricted portion, but it will create subtle correlations and we do not know how to analyze. The above fixed-order properties are crucial in the proof of our main theorem.
 
\begin{lemma}\label{rand}
	For a random permutation $\{Y_1,...,Y_n\}$ of $\{1,...,n\}$, the number of elements $Y_i$ s.t. $Y_i=\max_{j \le i} Y_j$ does not exceed $6 \ln n+6$ w.p. at least $1-\frac{1}{2n^2}$.
\end{lemma}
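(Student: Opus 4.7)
The plan is to recognize the quantity as the number of \emph{records} (left-to-right maxima) in a uniformly random permutation and reduce the statement to a Chernoff bound. Write $X = \sum_{i=1}^n X_i$ with $X_i = \mathbf{1}[Y_i = \max_{j \le i} Y_j]$. By symmetry among $Y_1,\dots,Y_i$, any of them is equally likely to be the maximum, so $\Pr[X_i = 1] = 1/i$ and $\mathbb{E}[X] = H_n \le \ln n + 1$.

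The crucial observation---and the main conceptual step of the proof---is that the indicators $X_1,\dots,X_n$ are \emph{mutually independent}. The cleanest way to see this is via the ``insertion'' construction of a uniform random permutation: letting $R_i$ denote the rank of $Y_i$ among $Y_1,\dots,Y_i$, the vector $(R_1,\dots,R_n)$ has independent coordinates with $R_i$ uniform on $\{1,\dots,i\}$, and $X_i = \mathbf{1}[R_i = i]$. This reduces the statement to a tail bound for a sum of $n$ independent Bernoullis with means $1/i$.

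Then I would apply the standard Chernoff bound $\Pr[X \ge t] \le (e\mu/t)^t$, valid for any $t \ge \mu$. Plugging in $\mu = H_n \le \ln n + 1$ and $t = 6 \ln n + 6 \ge 6\mu$ yields
\[
\Pr[X \ge 6\ln n + 6] \;\le\; (e/6)^{6\ln n + 6} \;\le\; (1/2)^{6\ln n + 6} \;=\; \tfrac{1}{64}\,n^{-6\ln 2} \;\le\; \tfrac{1}{2n^2},
\]
using $e/6 < 1/2$ and $6\ln 2 > 4$, so that $n^{-6\ln 2} \le n^{-2}$ and the extra factor $\tfrac{1}{64}$ absorbs the $\tfrac{1}{2}$ for every $n \ge 1$. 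Once the independence claim is in hand, nothing else is subtle; the constants $6$ in $6\ln n + 6$ were chosen precisely so that this numerical check goes through uniformly in $n$. The only place where I expect one needs to be careful is the independence argument, which is the main piece of actual content in the proof.
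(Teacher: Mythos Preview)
Your proof is correct and follows essentially the same approach as the paper: both establish independence of the record indicators via the insertion (rank) construction of a uniform permutation and then apply a Chernoff bound. The only cosmetic difference is the form of Chernoff used---the paper applies $\Pr[X>(1+\epsilon)\mu]\le\exp(-\mu\epsilon^2/(2+\epsilon))$ with $\epsilon=5$ (which requires the lower bound $H_n>\ln n+\gamma$), whereas you use $\Pr[X\ge t]\le(e\mu/t)^t$ and only need the upper bound $H_n\le\ln n+1$.
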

\begin{proof}
	A random permutation can be built in such a way: 
	\begin{itemize}
 		\item $\forall 1 \le i \le n$, randomly and independently pick $Z_i$ from $\{1,...,i\}$.
	 	\item take the unique permutation $\{Y_1,...,Y_n\}$ s.t. $\forall i, Y_i$ is the $Z_i$-th largest element in $\{Y_1,...,Y_i\}$.
	\end{itemize}
 
 It's easy to see a random permutation built in this way is uniformly distributed.
 
 Let $\{X_1,...,X_n\}$ be a sequence of 0-1 random variables  indicating whether $Z_i=i$. The number mentioned in the lemma is just $X=\sum_{i=1}^n X_i$ since $Z_i=i \Leftrightarrow Y_i=\max_{j \le i} Y_j$. We have
 $$\Pr[X_i=1]=1-\Pr[X_i=0]=\frac{1}{i}$$
 
 Note that $\mu=\mathbb E[X]=H_n=\sum_{i=1}^{n}\frac{1}{i} \approx \ln n+\gamma$  as $n \rightarrow \infty$ where $\gamma$ is the euler constant, and
  $$\ln n+\gamma < H_{n} \le \ln n+1, \forall n \ge 1$$
 
 Plugging $\epsilon=5$ into a Chernoff bound:
  $\Pr[X>(1+\epsilon) \mu] \le \exp\left(-\frac{\mu \epsilon^2}{2+\epsilon}\right)$, we have
  $$\begin{aligned}
	\Pr[X>6(\ln n+1)]& \le \exp\left(-\frac{25}{7}(\ln n+\gamma)\right)\\
	& \le \frac{1}{2n^2}\end{aligned}$$
      \end{proof}

\mainthm*

 \begin{proof}
 The correctness of the algorithm directly follows from  Lemma \ref{corr}. We only need to bound the number of probes it uses.
 
 Follow the proof of Lemma \ref{corr} we know in each execution of the `while' loop, exactly one of the three cases happens:
 \begin{itemize}
 	\item We find a certificate for $u$ which doesn't have a valid one previously.
 	\item We find a mis-predicted edge.
 	\item We add a vertex $u$ into $A$.
 \end{itemize}
 	
The algorithm makes no more than two probes in each loop, so it's sufficient to bound the number of occurrences of each cases.
 
Case 2 and case 3 happen for at most $n+w$ times in all, because the number of mis-predicted edges is at most $w$ and each vertex is added into $A$ exactly once.
 
We now focus on case 1. Once a valid certificate for $u$ is found, we won't make any further probes for $u$ until its current certificate becomes invalid. According to Lemma \ref{fix1}, all vertices are added into $A$ in a fixed order, which means all possible type-1 certificates for $u$ (the set of which is $S_u$ initially) become invalid in a fixed order. 

Each time we find a uniformly random type-1 certificate for $u$ among its currently valid ones. In the analysis it can be equivalently viewed as that for each $u$, a random permutation $\{P_{1}^{(u)},...,P_{|S_u|}^{(u)}\}$ of $S_u$ is chosen and fixed at first, and all type-1 certificates used in the process are identified according to this permutation: when a new valid type-1 certificate is found, let it be $P_{i}^{(u)}$, where $i$ is the smallest index s.t. $P_{i}^{(u)}$ is currently valid as a type-1 certificate for $u$.

Let $\{Y_{1}^{(u)},...,Y_{|S_u|}^{(u)}\}$ represents the fixed order of $\{P_{1}^{(u)},...,P_{|S_u|}^{(u)}\}$ to become invalid, i.e. $P_{i}^{(u)}$ is the $Y_{i}^{(u)}$-th earliest to become invalid. $\{Y_{1}^{(u)},...,Y_{|S_u|}^{(u)}\}$ is a uniformly random permutation as well as $\{P_{1}^{(u)},...,P_{|S_u|}^{(u)}\}$, and the total number of valid type-1 certificates found for $u$ equals to the number of $Y_i^{(u)}$ s.t. $Y_i^{(u)}=\max_{j \le i} Y_j^{(u)}$.
 
From Lemma \ref{rand} we know w.p. at least $1-\frac{1}{2n^2}$, this number does not exceed $6 \ln n+6$. Take union bound over all $u$, w.p. at least $1-\frac{1}{2n}$, no vertex uses more than $(6 \ln n+6)$ type-1 certificates, hence total number of valid type-1 certificates the algorithm finds does not exceed $6 n \ln n+6n$.
 
The above analysis is exactly the same for type-2 certificates, since according to Corollary \ref{fix2}, the possible type-2 certificates for $u$ also become invalid in a fixed order. The only difference is that the number of valid type-2 certificates for each $u$ may be up to $n^2$, while it is at most $n$ for type-1 certificates. Again use Lemma \ref{rand} and take union bound, w.p. at least $1-\frac{1}{2n}$, the total number of valid type-2 certificates the algorithm finds does not exceed $12 n\ln n+6n$. 
 
 Combining all cases we can conclude that w.p. at least $1-\frac{1}{n}$, the algorithm uses no more than $O(n \log n+w)$ probes in total.
\end{proof}

\section{An algorithm using $O(nw)$ probes}

Here we briefly introduce a deterministic algorithm using $O(nw)$ probes, to show that the generalized sorting with prediction problem can be solved in only linear probes when $w$ is as small as a constant. 

The basic idea is to find a mis-predicted edge in $O(n)$ probes and correct it in the predicted graph. We use $\vec G_C=(V,\vec P_C)$ to denote the predicted graph after correction: $\vec P_C=\{(v,u)\in\vec{P}|(u,v)\notin \vec{Q}\}\cup \vec{Q}$ where $\vec Q\subseteq \vec E$ is the set of directed edges probed till now.

If there is a directed cycle in $\vec G_C$, there must be at least one mis-predicted edge on the cycle since the actual $\vec G$ is acyclic. We just probe all edges on a simple directed cycle, update $\vec G_C$ and loop again. 

If $\vec G_C$ is acyclic, consider running topological sort on it. If the direction of all edges in $G_C$ are correct, each time there should be exactly one vertex whose in-degree is 0. If not, i.e. there are two vertices $v_1,v_2$ with in-degree 0 at the same time, this can only happen when there are some mis-predicted edges either adjacent to $v_1,v_2$ or on the path produced by the topological sort before. We probe all such edges and loop again.

The pseudo code of the algorithm is stated as follows:


 \begin{algorithm}
 \caption{A deterministic algorithm using $O(nw)$ probes}
 \begin{algorithmic}[1]
 \While {there is no Hamiltonian path consisting of only probed edges}
  \If{there is a simple directed cycle in $\vec{G}_C=(V,\vec P_C)$}
  \State probe all the edges on the cycle
  \Else
  	\State run topological sort on $\vec G_C$ and stop when $\exists v_1,v_2$ both with in-degree 0
  	\State let $(a_1,...,a_k)$ be the (partial) topological order
  	\If {$k = n$ (i.e. no such $v_1,v_2$ found)}
		\State $\forall 1 \le i<k$, probe the edge $(a_i,a_{i+1})$
  	\Else
		\State $\forall 1 \le i<k$, probe the edge $(a_i,a_{i+1})$
  		\State probe all edges adjacent to $v_1,v_2$
  	\EndIf
  \EndIf
  \EndWhile
 \end{algorithmic}
  \end{algorithm}
  
\secthm*
  
  \begin{proof}
  In each execution of the `while' loop, exactly one of the following three cases happens, and we analyze them separately:
  \begin{enumerate}
    \item \textit{There is a simple directed cycle in $\vec{G}_C$.} Since the actual $\vec{G}$ is acyclic, at least one edge on the cycle in $\vec G_C$ is mis-predicted. By probing all edges on it we will find at least one mis-predicted edge. 
    \item \textit{$\vec{G}_C$ is acyclic and $k=n$ in line 7, i.e. the topological sort terminates normally.} By probing all edges between the adjacent vertices in the topological order, we either find a mis-predicted edge, or can know that the resulting path is just the desired Hamiltonian path. 
    \item \textit{$\vec G_C$ is acyclic and there are two different vertices $v_1,v_2$ with in-degree 0 during the topological sort.} 

    In this case, a mis-predicted edge must be found in line 10 or line 11. Prove it by contradiction, assume all edges probed in line 10 and line 11 are correct, if $k>0$, $(a_k,v_1)$ and $(a_k,v_2)$ must both lie in $\vec G_C$ since the topological sort stops just after handling $a_k$, so $\forall 1 \le i\le k, (a_i<_{\vec E} v_1) \land (a_i<_{\vec E} v_2)$ holds due to transitivity. W.l.o.g assume $v_1<_{\vec E}v_2$. Consider the directed path from $v_1$ to $v_2$ in the actual graph $\vec G$, let it be $(b_1,...,b_l)$  where $b_1=v_1$ and $b_l=v_2$. Note that $a_1<_{\vec E}...<_{\vec E} a_k<_{\vec E} b_1=v_1<_{\vec E} ... <_{\vec E} b_l=v_2$. Therefore the edge $(b_{l-1},b_l) \in \vec G_C$ and $b_{l-1} \notin \{a_1,...,a_k\}$, which contradicts the fact that the in-degree of $v_2$ is 0 at that moment.
   \end{enumerate}
   
  Therefore in each loop, we either find at least one mis-predicted edge in $\vec{G}_C$ or find the correct Hamiltonian path. It's obvious that the number of probes we make is $O(n)$ in each loop, so the total number of probes does not exceed $O(nw)$.
  \end{proof}

\bibliography{sorting}

\end{document}